\newtheorem{theorem}{Theorem}
\newtheorem{lemma}[theorem]{Lemma}
\newtheorem{corollary}[theorem]{Corollary}
\newtheorem{example}{Example}
\newtheorem{remark}{Remark}
\newcommand{\deff}{\mbox{$\stackrel{\rm def}{=}$}}
\newcommand{\field}[1]{\mathbb{#1}}
\newcommand{\F}{\field{F}}
\newcommand{\cB}{{\mathcal B}}
\newcommand{\cD}{{\mathcal D}}
\newcommand{\cG}{{\mathcal G}}
\newcommand{\cP}{{\mathcal P}}
\newcommand{\sP}{\cP}
\newcommand{\sG}{\cG}
\newcommand{\Gr}{\smash{{\sG\kern-1.5pt}_q\kern-0.5pt(n,k)}}
\newcommand{\Gk}{\smash{{\sG\kern-1.5pt}_q\kern-0.5pt(n,k_1)}}
\newcommand{\Gkk}{\smash{{\sG\kern-1.5pt}_q\kern-0.5pt(n,k_2)}}
\newcommand{\Grtwo}{\smash{{\sG\kern-1.5pt}_2\kern-0.5pt(n,k)}}
\newcommand{\Gkone}{\smash{{\sG\kern-1.5pt}_q\kern-0.5pt(n,k_1)}}
\newcommand{\Gktwo}{\smash{{\sG\kern-1.5pt}_q\kern-0.5pt(n,k_2)}}
\newcommand{\Ps}{\smash{{\sP\kern-2.0pt}_q\kern-0.5pt(n)}}
\begin{document}

\title{Optimal Fractional Repetition Codes\\ and Fractional Repetition Batch Codes}

\author{
\IEEEauthorblockN{Natalia Silberstein and Tuvi Etzion}
\thanks{This research was supported in part  by the Israeli Science Foundation (ISF), Jerusalem, Israel, under Grant 10/12.}
  \thanks{The first author was supported in part at the Technion by a Fine Fellowship.}
\IEEEauthorblockA{
Computer Science Department\\
Technion-Israel Institute of Technology\\
Haifa 32000, Israel\\
Email: \{natalys, etzion\}@cs.technion.ac.il\\}
}

\maketitle

\begin{abstract}
Fractional repetition (FR) codes is a family of codes for distributed storage systems (DSS) that allow uncoded exact repairs with minimum repair bandwidth. In this work, we consider a bound on the maximum amount of data that can be stored using an FR code. Optimal FR codes which attain this bound are presented. The constructions of these FR codes are based on families of regular graphs, such as Tur\'an graphs and graphs with large girth; and  on combinatorial designs, such as transversal designs and generalized polygons.
In addition, based on a connection between FR codes and batch codes, we propose a new family of codes for DSS, called fractional repetition batch codes, which allow uncoded efficient exact repairs and load balancing which can be performed by several users in parallel.
\end{abstract}


\section{Introduction}
In distributed storage systems, data is stored across a network of nodes, which can unexpectedly fail. To provide reliability, data redundancy based on coding techniques is introduced in such systems. Moreover, existing erasure codes allow to minimize  the storage overhead.
In~\cite{dimakis} Dimakis et al. introduced a new family of erasure codes, called \emph{regenerating codes}, which allow efficient single node repairs.  In particular, they presented two families of regenerating codes, called \emph{minimum storage regenerating} (MSR) codes and \emph{minimum bandwidth regenerating} (MBR) codes, which correspond to the two extreme points on the storage-bandwidth trade-off~\cite{dimakis}.
An $(n,k,d,\alpha,\beta)_q$ regenerating code~$C$, where $k\leq d\leq n-1$, $\beta\leq \alpha$, is used to store a file in $n$ nodes; each node stores $\alpha$ symbols from $\F_q$, the finite field with $q$ elements,
such that the stored file can be recovered by downloading  the data from any set of $k$ nodes.
 When a single node fails, a newcomer node which substitutes the failed node contacts  with a random set of $d$ other nodes and downloads $\beta$ symbols of each node in this set to reconstruct the failed data. This process is called a \emph{node repair}, and the amount of data downloaded to repair a failed node, $\beta d$, is called the \emph{repair bandwidth}.

In~\cite{Rashmi09,RSKR12} Rashmi et al. presented a construction for MBR codes which have the additional property of exact \emph{repair by transfer}, or exact \emph{uncoded repair}. In other words, the $(n,k,d=n-1,M=k\alpha-\binom{k}{2},\alpha=n-1,\beta=1)$ code proposed in~\cite{Rashmi09,RSKR12} allows efficient exact node repairs where no decoding is needed. Every node participating in a node repair process just passes one symbol
which will be  directly stored in the newcomer node.
This construction is based on a concatenation of an outer MDS code with an inner repetition code based on a complete graph.
El Rouayheb and Ramchandran~\cite{RoRa10} generalized the construction of~\cite{Rashmi09} and defined a new family of codes for DSS which allow exact repairs by transfer for a wide range of parameters. These codes, called \emph{DRESS} (Distributed Replication based Exact Simple Storage) codes~\cite{PNRR11}, consist of the concatenation of an outer MDS code and the inner repetition code called \emph{fractional repetition} (FR) code.
However, in contrast
to MBR codes, where a random set of size $d$ of
available nodes is used for a node repair, the repairs with DRESS codes are table based. This usually allows to store more data compared to MBR codes.

Constructions of FR codes  based on some regular graphs and combinatorial designs
can be found for example in~\cite{KoGi11,RoRa10,OlRa12,OlRa13}. However, the optimality of the constructed FR codes regarding the FR capacity, i.e. the maximality of the size of the stored file, was not considered.

In this work, we address the problem of constructing optimal FR codes and hence, optimal DRESS codes.
Moreover, based on a connection between FR codes and combinatorial batch codes,
we propose a new family of codes for DSS, called fractional repetition batch (FRB) codes, which enable uncoded repairs and load balancing that can be performed by several users in parallel.

The rest of the paper is organized as follows.
In Section~\ref{sec:DRESS} we define DRESS codes and FR codes based on regular graphs and combinatorial designs.
In Section~\ref{sec:rho2} we present optimal FR codes  based on Tur\'an graphs and on graphs with large girth.
In Section~\ref{sec:rho>2} we consider optimal FR codes based on transversal designs and on generalized polygons.
In Section~\ref{sec:FRB} we define FRB codes and present some examples for their constructions.
Conclusion is given in Section~\ref{sec:conclusion}.
We point out that,
throughout this paper, proofs are often omitted due to space
limitations. Details of all the proofs can be found in~\cite{SiEtFR}.



\section{Preliminaries}
\label{sec:DRESS}
 An $(n,\alpha, \rho)$ \emph{FR code} $C$ 
is a collection of $n$ subsets $N_1,\ldots, N_n$ of $[\theta]\deff\{1,2,\ldots,\theta\}$, $n\alpha=\rho \theta$,  such that
\begin{itemize}
  \item $|N_i|=\alpha$ for  each $i$, $1\leq i\leq n$;
  \item each symbol of $[\theta]$ belongs to exactly $\rho$ subsets in $C$, where $\rho$ is called the \emph{repetition degree} of $C$.
\end{itemize}
A $\left[(\theta, M), k, (n,\alpha, \rho)\right]$ \emph{DRESS code} is a code obtained by the concatenation of an outer $(\theta, M)$ MDS code and an inner $(n,\alpha, \rho)$ FR code $C$. To store a file $\textbf{f}\in \F_q^M$  in a DSS, $\textbf{f}$  is first encoded by using the MDS code; next, the $\theta$ symbols of the codeword $\textbf{c}_{\textbf{f}}$ from the MDS code, which encodes the file $\textbf{f}$, are placed in the $n$ nodes defined by $C$, as follows: node $i\in [n]$ of the DSS stores $\alpha$ symbols of $\textbf{c}_\textbf{f}$, indexed by the elements of the subset $N_i$.
The encoding scheme for a DRESS code is shown in Fig.~\ref{fig:FRscheme}.
\begin{figure}[t]
\centering
\includegraphics[trim=0 0 0 70,clip=true,width=0.99\columnwidth]{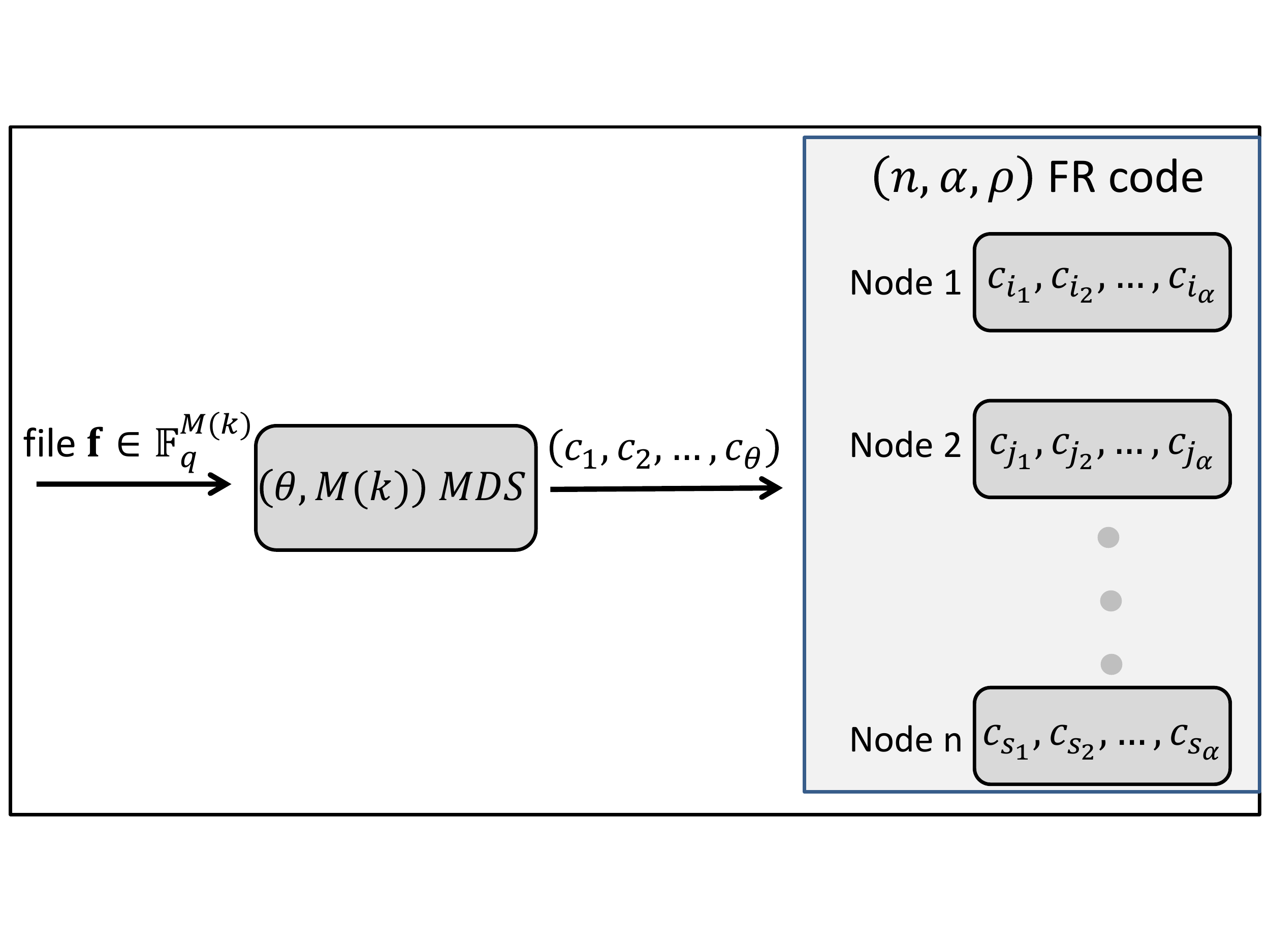}
\caption{The encoding scheme for a DRESS code}\label{fig:FRscheme}
\end{figure}

Each symbol of $\textbf{c}_\textbf{f}$ is stored in exactly $\rho$ nodes. It should be possible to reconstruct the stored file $\textbf{f}$ of size $M$ from any set of $k$ nodes, and hence,
\begin{equation}
\label{eq:rate}
M\leq \min_{|I|=k}|\cup_{i\in I}N_i|.
\end{equation}
Since we want to maximize the size of a file that can be stored by using  a DRESS code, in the sequel we will always assume that $M=\min_{|I|=k}|\cup_{i\in I}N_i|$.
Note, that the same FR code can be used in different DRESS codes, with different $k$'s as reconstruction degrees, and different MDS codes. The file size $M$, which is the dimension of the chosen MDS code,  depends on the value of $k$ and hence in the sequel we will use $M(k)$ to denote the size of the file.
An $(n,\alpha, \rho)$  FR code is called \emph{universally good}~\cite{RoRa10} if for any $k\leq \alpha$ the $\left[(\theta, M(k)), k, (n,\alpha, \rho)\right]$ DRESS code satisfies
\begin{equation}
\label{eq:univGood}
M(k)\geq k\alpha-\binom{k}{2},
\end{equation}
where the  righthand side of equation~(\ref{eq:univGood}) is the maximum file size (called \emph{MBR capacity}) that can be stored using an MBR code~\cite{dimakis}.
 Note also that if an FR code $C$ is universally good
  then $|N_i\cap N_j|\leq 1$, for $N_i,N_j\in C$, $i\neq j\in[n]$~\cite{Rashmi09}.
In the sequel, we will consider only universally good FR codes.


An upper bound on the maximum file size $M(k)$ of a $\left[(\theta, M(k)), k, (n,\alpha, \rho)\right]$ DRESS code ($n\alpha=\rho\theta$), called the \emph{FR capacity} and denoted in the sequel by $A(n,k,\alpha, \rho)$, was presented in~\cite{RoRa10}:
  \begin{equation}\label{eq:bound2}
   A(n,k,\alpha, \rho)\leq  \varphi(k), \textmd{ where } \varphi(1)=\alpha,
  \end{equation}
  \[\textmd{  } \varphi(k+1)=\varphi(k)+\alpha-\left\lceil\frac{\rho \varphi(k)-k\alpha}{n-k}\right\rceil.\]
   Note  that for any given $k$, the function $A(n,k,\alpha, \rho)$ is determined by the parameters of the inner FR code.
   We call an FR code \emph{$k$-optimal} if a file stored by using this code is the maximum possible for the given $k$.
   We call an FR code \emph{optimal} if  for any $k\leq \alpha$  it is $k$-optimal.

Let $C$ be an $(n,\alpha,\rho)$ FR code. $C$ can be described by an \emph{incidence matrix} $\textbf{I}(C)$, which is an $n\times \theta$ binary matrix, $\theta=\frac{n\alpha}{\rho}$, whose rows indexed by the nodes and columns indexed by the symbols of the corresponding MDS codeword, such that $(\textbf{I}(C))_{i,j}=1$ if and only if node $i$ contains symbol~$j$.
Note that  every row of $\textbf{I}(C)$ has $\alpha$ \emph{ones} and every column of $\textbf{I}(C)$ has $\rho$ \emph{ones}.

Let $G=(V,E)$ be an $\alpha$-regular graph with $n=|V|$ vertices. We say that an $(n,\alpha,\rho=2)$ FR code $C$ \emph{is based on $G$} if $\textbf{I}(C)=\textbf{I}(G)$, where $\textbf{I}(G)$ is the $|V|\times |E|$ incidence matrix of $G$. Such a code will be denoted by $C_G$.

Let $\cD=(\cP,\cB)$ be a design with $|\cP|=n$ points such that each block $B\in\cB$ contains $\rho$ points and each point $p\in \cP$ is contained in $\alpha$ blocks. We say that an $(n,\alpha,\rho)$ FR code $C$ \emph{is based on $\cD$} if $\textbf{I}(C)=\textbf{I}(\cD)$, where $\textbf{I}(\cD)$ is the $|\cP|\times |\cB|$ incidence matrix of $\cD$. Such a code will be denoted by $C_{\cD}$.


\section{Optimal FR Codes with Repetition Degree $\rho=2$}
\label{sec:rho2}
In this section we consider optimal FR codes with repetition degree 2.
First, we present the following useful lemma which shows a connection between the problem of finding the maximum file size of an FR code based on a graph and the edge isoperimetric problem on graphs~\cite{Bez99}.

\begin{lemma}
\label{lm:isoperimetric}
Let $G=(V,E)$ be an $\alpha$-regular graph and let $C_G$ be the FR code based on $G$. We denote by $G_k$ the family of induced subgraphs of $G$ with $k$ vertices.
Then the file size $M(k)$ of $C_G$ is given by
$$M(k)=k\alpha -\max_{G'=(V',E')\in G_k}|E'|.
$$
\end{lemma}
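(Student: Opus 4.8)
The plan is to compute $M(k) = \min_{|I|=k} |\cup_{i\in I} N_i|$ directly by translating the union-of-subsets quantity into a statement about the graph $G$. Recall that for the FR code $C_G$ based on an $\alpha$-regular graph $G=(V,E)$, the incidence matrix is $\textbf{I}(G)$: rows are indexed by vertices of $G$ and columns by edges of $G$, so the symbol set is $[\theta] = E$ and the node $N_i$ is exactly the set of edges incident to vertex $i$. Thus a set $I$ of $k$ nodes corresponds to a set $V'$ of $k$ vertices, and $\cup_{i\in I} N_i$ is the set of all edges of $G$ incident to at least one vertex of $V'$.

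The main computation is to count this union by inclusion--exclusion at the level of individual edges. First I would observe that since each vertex has degree $\alpha$, the total count $\sum_{i\in I} |N_i| = k\alpha$ counts every edge with both endpoints in $V'$ exactly twice, and every edge with exactly one endpoint in $V'$ exactly once; edges with no endpoint in $V'$ are not counted. Hence
\[
\Big|\bigcup_{i\in I} N_i\Big| = k\alpha - |E'|,
\]
where $E'$ is the set of edges of $G$ having both endpoints in $V'$, i.e. $E'$ is precisely the edge set of the induced subgraph $G'=(V',E')$ of $G$ on the vertex set $V'$. This is the key identity, and it is essentially where the ``$\rho=2$'' hypothesis enters: each symbol (edge) sits in exactly two nodes (its two endpoints), so double-counting an internal edge removes exactly one unit from the naive sum.

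Finally I would take the minimum over all $k$-subsets $I$. Since minimizing $k\alpha - |E'|$ over all induced $k$-vertex subgraphs $G'=(V',E')$ is the same as maximizing $|E'|$, we obtain
\[
M(k) = \min_{|I|=k} \Big|\bigcup_{i\in I} N_i\Big|
     = k\alpha - \max_{G'=(V',E')\in G_k} |E'|,
\]
using the standing assumption $M(k)=\min_{|I|=k}|\cup_{i\in I}N_i|$ from equation~(\ref{eq:rate}). I do not expect a genuine obstacle here; the only point requiring a little care is the bijective correspondence between $k$-subsets of nodes and induced $k$-vertex subgraphs, and the verification that the double-counting argument correctly identifies the overcount with $|E'|$ rather than with, say, the number of boundary edges. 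The connection to the edge isoperimetric problem is then immediate, since maximizing the number of internal edges over all $k$-vertex induced subgraphs is exactly the edge isoperimetric extremal quantity.
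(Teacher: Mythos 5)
Your proof is correct and follows essentially the same route as the paper: both rest on the degree-sum identity $k\alpha = 2|E'| + |E'_{\mathrm{cut}}|$ (you phrase it as double-counting internal edges in $\sum_{i\in I}|N_i|$, the paper names the cut-edge set explicitly), and both then minimize $k\alpha - |E'|$ over $k$-subsets. No gaps; the identification of $\cup_{i\in I}N_i$ with internal plus cut edges is exactly the paper's argument.
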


\begin{proof}
For each induced subgraph $G'=(V',E')\in G_k$ we define  $E'_{\textmd{cut}}$ to be the set of all the edges of $E$ in the cut between $V'$ and $V\setminus V'$, i.e.,
$$E'_{\textmd{cut}}=\{\{v,u\}\in E: v\in V', u\in V\setminus V'\}.
$$
Clearly,  $k\alpha=2|E'|+|E'_{\textmd{cut}}|$ for every $G'\in G_k$.
Note that $M(k)=\min_{G'\in G_k}\{|E'|+|E'_{\textmd{cut}}|\}$ and hence
$$M(k)=\min_{G'\in G_k}\{|E'|+\alpha k-2|E'|\}=\alpha k-\max_{G'\in G_k}\{|E'|\}.
$$
\end{proof}
The following lemma directly follows from Lemma~\ref{lm:isoperimetric}.
\begin{lemma}
\label{lm:clique} Let $G$ be an $\alpha$-regular graph with $n$ vertices, and let $M(k)$ be the file size of the corresponding code~$C_{G}$.
The graph $G$ contains a $k$-clique if and only if $M(k)=k\alpha-\binom{k}{2}$.
\end{lemma}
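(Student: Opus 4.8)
The plan is to invoke Lemma~\ref{lm:isoperimetric}, which already expresses $M(k)$ in terms of a maximum edge count over induced $k$-vertex subgraphs, and then to pin down exactly when that maximum attains its largest conceivable value. Writing $m_k=\max_{G'=(V',E')\in G_k}|E'|$, Lemma~\ref{lm:isoperimetric} gives $M(k)=k\alpha-m_k$, so the claimed equivalence is just a restatement of the assertion that $m_k=\binom{k}{2}$ if and only if $G$ contains a $k$-clique.

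First I would record the trivial upper bound $m_k\le\binom{k}{2}$: any graph on $k$ vertices has at most $\binom{k}{2}$ edges, with equality only for the complete graph $K_k$. This already yields $M(k)\ge k\alpha-\binom{k}{2}$ unconditionally, which is consistent with the universal goodness assumption of the previous paragraph.

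The key (and essentially only substantive) step is to translate ``an element of $G_k$ with $\binom{k}{2}$ edges'' into ``a $k$-clique of $G$''. This is where inducedness matters: an induced subgraph $G'=(V',E')\in G_k$ contains \emph{every} edge of $G$ joining two vertices of $V'$, so $|E'|=\binom{k}{2}$ forces all $\binom{k}{2}$ pairs of $V'$ to be adjacent in $G$, i.e. $V'$ is a $k$-clique. Conversely, the subgraph induced on the vertices of any $k$-clique of $G$ belongs to $G_k$ and has exactly $\binom{k}{2}$ edges, so it attains the bound.

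Combining these facts, $m_k=\binom{k}{2}$ holds precisely when some $G'\in G_k$ is complete, which happens precisely when $G$ has a $k$-clique; equivalently $M(k)=k\alpha-m_k=k\alpha-\binom{k}{2}$ exactly in that case. I do not expect any genuine obstacle here, since once Lemma~\ref{lm:isoperimetric} is in hand the statement is immediate; the only point requiring care is not to conflate an arbitrary $k$-vertex subgraph with an \emph{induced} one, as the edge-maximality characterization of cliques relies crucially on taking induced subgraphs.
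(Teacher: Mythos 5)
Your proof is correct and follows exactly the route the paper intends: the paper gives no explicit argument, stating only that the lemma ``directly follows from Lemma~\ref{lm:isoperimetric},'' and your write-up simply fills in the implicit details (the bound $\max_{G'\in G_k}|E'|\le\binom{k}{2}$ and the observation that equality for an \emph{induced} subgraph forces a clique). Your care about inducedness is precisely the point the paper leaves to the reader.
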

\begin{corollary}
 \label{cor:clique}
 The file size $M(k)$ of an FR code $C_G$, where $G$  is a graph which does not contain a $k$-clique, is strictly larger than the MBR capacity.
\end{corollary}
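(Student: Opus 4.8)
The plan is to derive this directly from Lemma~\ref{lm:clique} together with the elementary extremal fact that, among all simple graphs on $k$ vertices, the complete graph is the unique maximizer of the number of edges. Since the righthand side of~(\ref{eq:univGood}), namely $k\alpha-\binom{k}{2}$, is precisely the MBR capacity, the goal is to establish the \emph{strict} inequality $M(k)>k\alpha-\binom{k}{2}$ under the hypothesis that $G$ has no $k$-clique.

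First I would invoke Lemma~\ref{lm:isoperimetric} to rewrite $M(k)=k\alpha-\max_{G'=(V',E')\in G_k}|E'|$, so that bounding $M(k)$ from below is equivalent to bounding $\max_{G'\in G_k}|E'|$ from above. The elementary bound is that any simple graph on $k$ vertices has at most $\binom{k}{2}$ edges, with equality if and only if it is the complete graph $K_k$. Hence $\max_{G'\in G_k}|E'|\leq\binom{k}{2}$, which already recovers $M(k)\geq k\alpha-\binom{k}{2}$ and in passing confirms that every graph-based FR code is universally good.

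It remains to upgrade this to a strict inequality, and here I would use the contrapositive of Lemma~\ref{lm:clique}: an induced $k$-vertex subgraph $G'$ can attain $|E'|=\binom{k}{2}$ only when $G'=K_k$, i.e.\ only when $G$ contains a $k$-clique. Since by hypothesis $G$ contains no $k$-clique, every induced subgraph in $G_k$ satisfies $|E'|<\binom{k}{2}$, so $\max_{G'\in G_k}|E'|\leq\binom{k}{2}-1$ and therefore $M(k)\geq k\alpha-\binom{k}{2}+1>k\alpha-\binom{k}{2}$. The sole point requiring care---and it is the only substantive step, since Lemma~\ref{lm:clique} does the real work---is the strictness of the edge bound, which rests entirely on the uniqueness of $K_k$ as the edge-maximizer on $k$ vertices; there is no genuine obstacle beyond this observation.
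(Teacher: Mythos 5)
Your proof is correct and follows essentially the same route as the paper: Corollary~\ref{cor:clique} is exactly the combination of Lemma~\ref{lm:isoperimetric} (giving $M(k)=k\alpha-\max_{G'\in G_k}|E'|$ and hence $M(k)\geq k\alpha-\binom{k}{2}$) with the ``only if'' direction of Lemma~\ref{lm:clique}, which is what you invoke via the uniqueness of $K_k$ as the edge-maximizer. The paper treats this as an immediate consequence of Lemma~\ref{lm:clique}; your write-up just makes the strictness step explicit.
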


One of the main advantages of an FR code is that its file size usually exceeds the MBR capacity. Hence, as a consequence of Corollary~\ref{cor:clique}, we consider regular graphs which do not contain a $k$-clique for a given $k$. In particular,
we consider a family of regular graphs, called \emph{Tur\'an graphs},  which do not contain a  clique of a given size and also have the smallest number of vertices~\cite{Jukna}.
Let $r,n$ be two integers such that $r$ divides~$n$. An $(n,r)$-\emph{Tur\'an graph} is defined as a regular complete $r$-partite graph, i.e., a graph formed by partitioning the set of~$n$ vertices into $r$ parts of size $\frac{n}{r}$ and connecting each two vertices of different parts by an edge.  Clearly, an $(n,r)$-Tur\'an graph does not contain a clique of size $r+1$  and it is an $(r-1)\frac{n}{r}$-regular graph.

The following theorem shows that FR codes obtained from Tur\'an graphs attain the upper bound in~(\ref{eq:bound2}) for all $k\leq \alpha$ and hence they are optimal FR codes. The proof of this theorem follows from Lemma~\ref{lm:isoperimetric} and by Tur\'an's theorem~\cite[p. 58]{Jukna}.
\begin{theorem}
\label{thm:alternativeTuran}
Let $T=(V,E)$ be an $(n,r)$-Tur\'an graph, $r<n$, $\alpha=(r-1)\frac{n}{r}$, and let $k$ be an integer such that $1\leq k\leq \alpha$. Then the $(n,\alpha,2)$ FR code $C_{T}$ based on $T$ has file size given by
\begin{equation}
\label{eq:altTuran}
M(k)=k\alpha-\left\lfloor\frac{r-1}{r}\cdot\frac{k^2}{2}\right\rfloor
\end{equation}
 which attains the upper bound in~(\ref{eq:bound2}).
\end{theorem}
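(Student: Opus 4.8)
The plan is to compute $M(k)$ by applying Lemma~\ref{lm:isoperimetric}, which reduces the problem to determining $\max_{G'\in G_k}|E'|$, the maximum number of edges in any $k$-vertex induced subgraph of the Tur\'an graph $T$. By Lemma~\ref{lm:isoperimetric} we have $M(k)=k\alpha-\max_{G'\in G_k}|E'|$, so the whole theorem comes down to showing that
\begin{equation}
\label{eq:maxedges}
\max_{G'=(V',E')\in G_k}|E'|=\left\lfloor\frac{r-1}{r}\cdot\frac{k^2}{2}\right\rfloor.
\end{equation}
First I would observe that a $k$-vertex induced subgraph of the complete $r$-partite Tur\'an graph $T$ is obtained by selecting some number $k_i$ of vertices from each part $i$, with $\sum_{i=1}^r k_i=k$; the induced subgraph is then the complete $r$-partite graph on parts of sizes $k_1,\ldots,k_r$ (some possibly zero), so its edge count is $\binom{k}{2}-\sum_{i=1}^r\binom{k_i}{2}$. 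Hence maximizing $|E'|$ is equivalent to \emph{minimizing} $\sum_i\binom{k_i}{2}$ subject to $\sum_i k_i=k$ and $0\le k_i\le \frac{n}{r}$.

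The key step is this minimization. The sum $\sum_i\binom{k_i}{2}$ is a convex symmetric function of the $k_i$, so it is minimized when the $k_i$ are as equal as possible, i.e. when each $k_i\in\{\lfloor k/r\rfloor,\lceil k/r\rceil\}$. Here I would invoke the extremal bound underlying Tur\'an's theorem: the maximum number of edges in a $K_{r+1}$-free graph on $k$ vertices is exactly $\left(1-\frac1r\right)\frac{k^2}{2}$ rounded appropriately, and this extremum is achieved by the balanced complete $r$-partite graph on $k$ vertices. Since the induced subgraph of $T$ on the most balanced selection of $k_i$'s is precisely this balanced complete $r$-partite graph (which is realizable because $k\le\alpha=(r-1)\frac{n}{r}<n$ guarantees the part-size constraints $k_i\le n/r$ can be met in a balanced split), the maximum in~(\ref{eq:maxedges}) equals the Tur\'an number, giving the stated floor expression. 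This directly yields~(\ref{eq:altTuran}) after substituting into Lemma~\ref{lm:isoperimetric}.

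It remains to confirm that the value in~(\ref{eq:altTuran}) matches the recursive upper bound $\varphi(k)$ from~(\ref{eq:bound2}), so that optimality follows. For this I would argue in two directions: the file size always satisfies $M(k)\le A(n,k,\alpha,2)\le\varphi(k)$ by definition of the FR capacity, and conversely the explicit formula for $M(k)$ proved above can be shown to satisfy the recursion defining $\varphi$ (with $\rho=2$), so $M(k)=\varphi(k)$ for all $k$. The main obstacle I anticipate is the careful floor-function bookkeeping in verifying that $k\alpha-\left\lfloor\frac{r-1}{r}\frac{k^2}{2}\right\rfloor$ exactly satisfies the recurrence $\varphi(k+1)=\varphi(k)+\alpha-\left\lceil\frac{2\varphi(k)-k\alpha}{n-k}\right\rceil$; the ceiling and floor interact with the divisibility of $k$ by $r$, and handling the cases according to $k\bmod r$ (together with checking the balanced split indeed respects $k_i\le n/r$ throughout the range $k\le\alpha$) is where the computation is most delicate. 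Everything else reduces to the convexity argument and a direct appeal to Tur\'an's theorem as cited.
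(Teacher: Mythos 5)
Your proposal follows the same route as the paper's own (sketched) proof: Lemma~\ref{lm:isoperimetric} reduces $M(k)$ to the maximum edge count of an induced $k$-vertex subgraph, Tur\'an's theorem is invoked to identify that maximum with the edge count of the balanced complete $r$-partite graph on $k$ vertices, and agreement with the recursive bound~(\ref{eq:bound2}) is to be checked by induction. Your convexity argument and the feasibility check $k_i\le n/r$ are simply the details behind the appeal to Tur\'an's theorem, and, like the paper, you leave the induction bookkeeping against~(\ref{eq:bound2}) as a sketch; so in approach the two proofs coincide.

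However, one step deserves scrutiny, and it hides inside your phrase ``rounded appropriately'': the exact number of edges of the balanced complete $r$-partite graph on $k$ vertices is \emph{not} $\bigl\lfloor\bigl(1-\frac{1}{r}\bigr)\frac{k^2}{2}\bigr\rfloor$ in general. Writing $k=qr+s$ with $0\le s<r$, the exact count is
\[
\binom{k}{2}-s\binom{q+1}{2}-(r-s)\binom{q}{2}
\;=\;\left(1-\frac{1}{r}\right)\frac{k^2}{2}-\frac{s(r-s)}{2r},
\]
which agrees with the floor expression precisely when $s(r-s)<2r$; this holds for all $k$ when $r\le 7$, but can fail for $r\ge 8$. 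Concretely, take the $(16,8)$-Tur\'an graph, so $\alpha=14$, and $k=4$: the densest induced subgraph is $K_4$ with $6$ edges (four vertices in distinct parts), whereas $\bigl\lfloor\frac{7}{8}\cdot\frac{16}{2}\bigr\rfloor=7$. Hence the true file size is $M(4)=56-6=50$ (the MBR capacity, consistent with Lemma~\ref{lm:clique} since this graph contains a $4$-clique), while formula~(\ref{eq:altTuran}) gives $49$; one can also check $\varphi(4)=50$ here, so the code does attain~(\ref{eq:bound2}) and it is only the closed formula that is off. This is not a defect of your write-up alone --- the paper's statement and proof rely on the identical identification, and Tur\'an's theorem as cited supplies only the upper bound $\bigl(1-\frac{1}{r}\bigr)\frac{k^2}{2}$, not equality of the extremal edge count with its floor. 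A fully correct argument must either restrict to $r\le 7$ (equivalently, require $(k\bmod r)(r-k\bmod r)<2r$) or replace the floor expression by the exact Tur\'an-graph edge count above; with that correction, your convexity/feasibility argument and the final induction go through, and your proof would then be a faithful, more detailed version of the paper's.
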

Note that an $(n-1)$-regular complete graph $K_n$ is an $(n,n)$-Tur\'an graph. Hence, the construction of MBR codes from~\cite{Rashmi09,RSKR12} can be considered as a special case of our construction of the DRESS codes with an inner FR code based on a Tur\'an graph.
Note also that an $\alpha$-regular complete bipartite graph $K_{\alpha,\alpha}$ is a $(2\alpha,2)$-Tur\'an graph. The following example illustrates Theorem~\ref{thm:alternativeTuran} for such a graph.

\begin{example}
\label{ex:bipartite}
 The $(6,3,2)$ FR code based on $K_{3,3}$ and its file size for $1\leq k\leq 3$ are shown in Fig.~\ref{fig:bipartite}.
\begin{figure}[t]
 \centering
 \includegraphics[trim=0 0 0 40,clip=true,width=0.99\columnwidth]{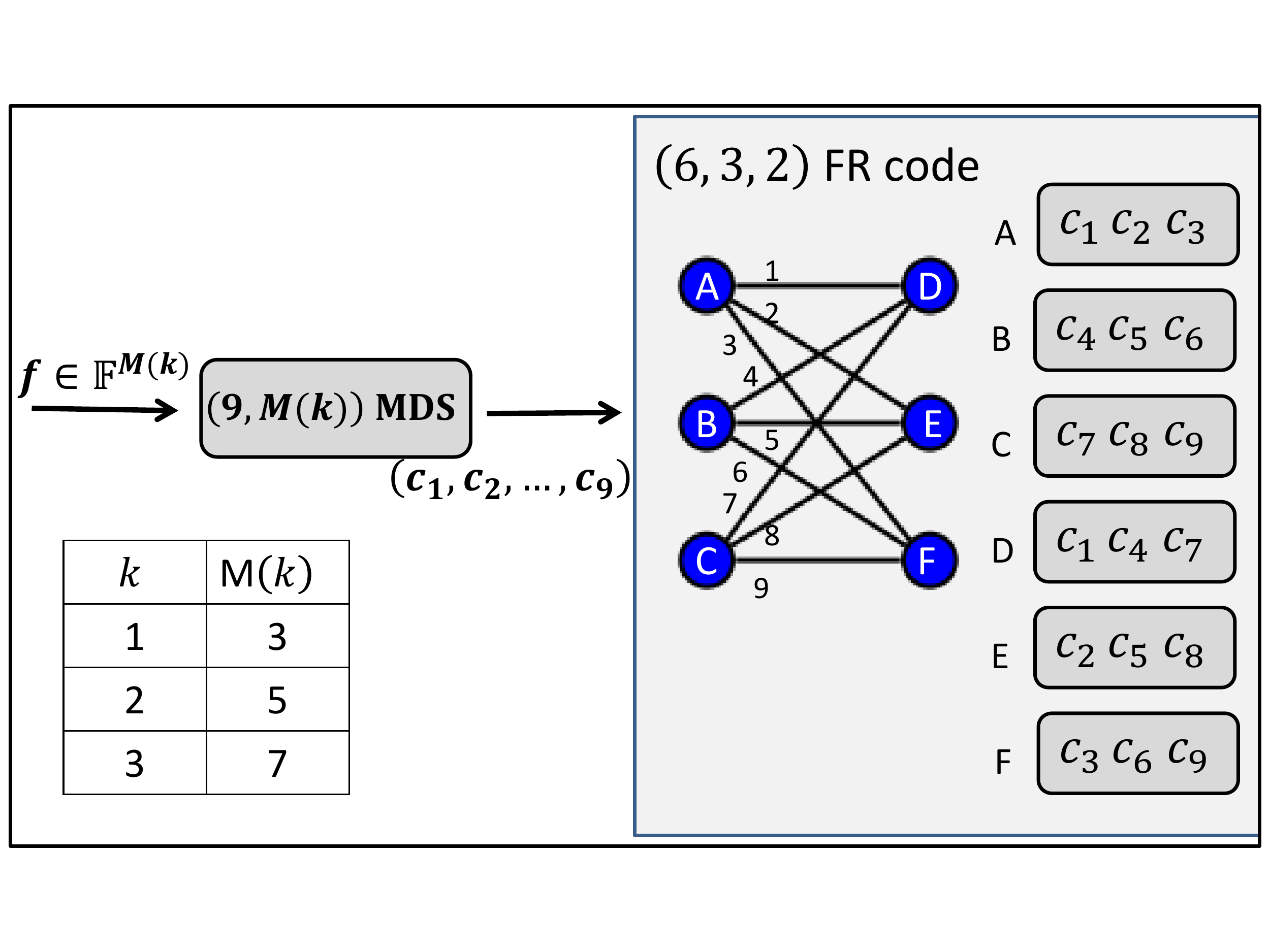}
 \caption{The $((9,M(k)),k,(6,3,2))$ DRESS code with the inner  FR code based on the complete bipartite graph $K_{3,3}$}\label{fig:bipartite}
\end{figure}
\end{example}

The proof of the following lemma can be easily verified from Lemma~\ref{lm:isoperimetric}.
\begin{lemma}
\label{lm:upper} Let $C$ be an $(n,\alpha, 2)$ FR code. Then the file size $M(k)$ of $C$ for any $1\leq k\leq \alpha$ satisfies
$$M(k)\leq k\alpha-k+1.
$$
\end{lemma}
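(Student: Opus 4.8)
The plan is to apply Lemma~\ref{lm:isoperimetric}, which reduces the computation of $M(k)$ to the edge-isoperimetric quantity $\max_{G'\in G_k}|E'|$, the largest number of edges spanned by any $k$ vertices of $G$. Writing $M(k)=k\alpha-\max_{G'\in G_k}|E'|$, the inequality $M(k)\le k\alpha-k+1$ is equivalent to the claim that some set of $k$ vertices induces at least $k-1$ edges, i.e.
\[
\max_{G'=(V',E')\in G_k}|E'|\ge k-1.
\]
So the entire task is to exhibit, in any $2$-regular-incidence FR code graph with $n$ vertices and minimum degree $\alpha$, a $k$-vertex induced subgraph containing at least $k-1$ edges.

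The key observation is that a connected graph on $k$ vertices has at least $k-1$ edges, since it contains a spanning tree. Hence it suffices to find $k$ vertices of $G$ that induce a connected subgraph. First I would note that $G$ is $\alpha$-regular and $k\le\alpha$, so $G$ has no isolated vertices and in fact has minimum degree $\alpha\ge k$; such a graph is very far from sparse. The natural strategy is a greedy breadth-first growth of a connected vertex set: start from any vertex $v_1$, and at each step add a vertex adjacent to the current set. Because every vertex has degree $\alpha\ge k-1$, as long as we have selected fewer than $k\le\alpha+1$ vertices the current set is nonempty and any already-chosen vertex still has neighbours outside the set (it has $\alpha$ neighbours, and we have excluded at most $k-1<\alpha+1$ vertices including itself, so a neighbour outside the set remains available). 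This lets us extend the connected set one vertex at a time until it has exactly $k$ vertices, yielding a connected induced subgraph on $k$ vertices, which carries at least $k-1$ edges.

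Plugging this back, $\max_{G'\in G_k}|E'|\ge k-1$, and therefore $M(k)=k\alpha-\max_{G'\in G_k}|E'|\le k\alpha-(k-1)=k\alpha-k+1$, as required. The only place requiring care is the greedy-extension count: I must verify that at each intermediate stage a frontier vertex genuinely has an unused neighbour, which follows from the degree bound $\alpha\ge k$ and the fact that the partially built set never exceeds $k-1$ vertices before the final addition. This is the main (though mild) obstacle; once the connectivity argument is set up cleanly the bound is immediate from Lemma~\ref{lm:isoperimetric}.
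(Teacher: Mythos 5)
Your proof is correct and takes essentially the same route the paper intends: the paper derives this lemma directly from Lemma~\ref{lm:isoperimetric}, and your argument does exactly that, reducing the claim to showing $\max_{G'\in G_k}|E'|\ge k-1$ and exhibiting a $k$-vertex connected induced subgraph via greedy growth (valid since $k\le\alpha$ guarantees a frontier vertex always has an unused neighbour). A minor shortcut you could use: since every vertex has $\alpha\ge k-1$ neighbours, a vertex together with any $k-1$ of its neighbours already induces at least $k-1$ edges, which avoids the greedy connectivity bookkeeping entirely.
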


By Lemma~\ref{lm:isoperimetric},  to obtain a large value for $M(k)$, every induced subgraph with $k$ vertices should be as sparse as possible. Hence, for the rest of this section
we consider graphs where the induced subgraphs with $k$ vertices, $1\leq k\leq \alpha$, will be cycle-free. These are graphs with girth at least $k+1$, where the \emph{girth} of a graph is the length of its shortest cycle.

\begin{lemma}
\label{lm:girth}
 Let $G$ be an $\alpha$-regular graph with $n$ vertices and let $M(k)$ be  the file size of the corresponding FR code $C_{G}$.
The girth of  $G$ is at least $k+1$ if and only if $M(k)=k\alpha -(k-1)$.
\end{lemma}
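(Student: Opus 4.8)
The plan is to prove both directions of the equivalence by invoking Lemma~\ref{lm:isoperimetric}, which reduces the computation of $M(k)$ to the extremal problem $\max_{G'\in G_k}|E'|$ of finding the densest induced subgraph on $k$ vertices. Concretely, $M(k)=k\alpha-(k-1)$ holds if and only if $\max_{G'\in G_k}|E'|=k-1$, so the entire claim becomes a statement about the maximum number of edges in an induced subgraph on $k$ vertices of $G$.

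For the forward direction, suppose the girth of $G$ is at least $k+1$. I would first observe that any induced subgraph $G'=(V',E')$ with $|V'|=k$ must be acyclic: any cycle in $G'$ would be a cycle of length at most $k$ in $G$, contradicting the girth assumption. A graph on $k$ vertices with no cycle is a forest, hence has at most $k-1$ edges, giving $\max_{G'\in G_k}|E'|\le k-1$. To see this bound is attained, I would exhibit an induced subgraph on $k$ vertices realizing exactly $k-1$ edges, i.e. an induced tree (a path on $k$ vertices suffices): such a subgraph exists in any $\alpha$-regular graph with $\alpha\ge k-1$ and girth exceeding $k$, for instance by walking along a shortest cycle or a breadth-first tree, since the girth condition guarantees no shortcut edges appear among the chosen vertices. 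Thus $\max_{G'\in G_k}|E'|=k-1$ and Lemma~\ref{lm:isoperimetric} yields $M(k)=k\alpha-(k-1)$.

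For the reverse direction, I would argue the contrapositive: if the girth of $G$ is at most $k$, then $G$ contains a cycle of length $\ell\le k$, and taking this $\ell$-cycle together with $k-\ell$ additional vertices produces an induced subgraph on $k$ vertices containing a cycle, hence one that is \emph{not} a forest. Such a subgraph has at least $\ell\ge k$ edges only in the worst count, but more carefully: an induced subgraph on $k$ vertices containing a cycle has at least $k$ edges when it is connected, and in general the presence of even a single cycle in some induced $k$-vertex subgraph forces $\max_{G'\in G_k}|E'|\ge k$, since a connected subgraph containing a cycle on $k$ vertices has at least $k$ edges. Therefore $M(k)=k\alpha-\max_{G'\in G_k}|E'|\le k\alpha-k<k\alpha-(k-1)$, so $M(k)\neq k\alpha-(k-1)$, completing the contrapositive.

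The main obstacle I anticipate is the existence claim in the forward direction and the edge-count bookkeeping in the reverse direction. For the forward direction I must ensure that a $k$-vertex induced subgraph with exactly $k-1$ edges genuinely exists rather than merely that all such subgraphs have \emph{at most} $k-1$ edges; the girth lower bound combined with $\alpha$-regularity should supply a long enough induced path, but this requires a short argument that a breadth-first exploration from any vertex yields an induced tree up to depth roughly $(k-1)/2$. For the reverse direction, the delicate point is confirming that some induced subgraph on exactly $k$ vertices attains $k$ or more edges; one must extend the short cycle to a full $k$-vertex set without inadvertently reducing the edge count below $k$, which follows because adding vertices to an induced subgraph can only add edges, and any induced subgraph containing a cycle on $\le k$ vertices, once padded to $k$ vertices, retains at least that cycle's edges plus whatever edges the new vertices contribute.
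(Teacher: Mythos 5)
Your reduction via Lemma~\ref{lm:isoperimetric} and your forward direction are sound, and they follow the same route as the paper's proof: girth at least $k+1$ holds iff every induced $k$-vertex subgraph is a forest, together with exhibiting one induced $k$-vertex subgraph with exactly $k-1$ edges (your induced path along a shortest cycle is a valid way to get attainment, since any chord among $k$ consecutive cycle vertices would create a shorter cycle).

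The gap is in the reverse direction. You take a cycle of length $\ell\le k$ and pad it with $k-\ell$ \emph{arbitrary} additional vertices, then claim the resulting induced subgraph has at least $k$ edges. Your justification --- ``adding vertices can only add edges,'' so the subgraph ``retains at least that cycle's edges plus whatever edges the new vertices contribute'' --- only guarantees at least $\ell$ edges, and $\ell$ can be as small as $3$ while $k$ is much larger. A $k$-vertex induced subgraph that is merely ``not a forest'' can have far fewer than $k$ edges (a triangle together with $k-3$ pairwise non-adjacent vertices has $3$ edges), and such a subgraph does \emph{not} contradict $\max_{G'\in G_k}|E'|=k-1$; you genuinely need to produce a subgraph with at least $k$ edges. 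The fact you invoke (``a connected subgraph on $k$ vertices containing a cycle has at least $k$ edges'') is true, but nothing in your construction makes the padded subgraph connected, so the conclusion does not follow as written. The repair: pad the cycle one vertex at a time, always choosing a vertex of $G$ adjacent to the already-chosen set; this is possible because the connected component containing the cycle has at least $\alpha+1>k$ vertices (by $\alpha$-regularity and $k\le\alpha$), and then each added vertex contributes at least one new edge, giving at least $\ell+(k-\ell)=k$ edges on $k$ vertices. With that fix your argument matches the paper's, which states the equivalence ``every induced $k$-vertex subgraph has at most $k-1$ edges iff the girth is at least $k+1$'' tersely and leaves exactly this point implicit.
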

\begin{corollary}
\label{cor:girthOptimal}
For each $k\leq g-1$,
an FR code $C_G$ based on an $\alpha$-regular graph~$G$ with girth $g$ attains the bound in~(\ref{eq:bound2}), and hence it is $k$-optimal. $C_G$ also attains the bound of Lemma~\ref{lm:upper}.
\end{corollary}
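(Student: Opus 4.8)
The plan is to combine the girth characterisation of Lemma~\ref{lm:girth} with a direct evaluation of the capacity bound~(\ref{eq:bound2}) for $\rho=2$. The hypothesis $k\le g-1$ is exactly the statement that the girth of $G$ is at least $k+1$, so Lemma~\ref{lm:girth} immediately yields $M(k)=k\alpha-(k-1)=k\alpha-k+1$. This already settles the last assertion, since $k\alpha-k+1$ is precisely the upper bound of Lemma~\ref{lm:upper}, so $C_G$ meets that bound with equality. What remains is to show that this same value equals $\varphi(k)$, the right-hand side of~(\ref{eq:bound2}) specialised to $\rho=2$; once that is done, the chain $M(k)\le A(n,k,\alpha,2)\le\varphi(k)=M(k)$ (the first inequality because $C_G$ is one code of these parameters and $A$ is the capacity, the second by~(\ref{eq:bound2})) forces $M(k)=A(n,k,\alpha,2)$, i.e.\ $k$-optimality.

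To evaluate $\varphi$ I would prove by induction on $j$ that $\varphi(j)=j\alpha-(j-1)$ for all $1\le j\le g-1$. The base case $\varphi(1)=\alpha$ is immediate. For the inductive step, substituting the hypothesis into the recursion with $\rho=2$ gives $2\varphi(j)-j\alpha=j(\alpha-2)+2$, so that $\varphi(j+1)=\varphi(j)+\alpha-\big\lceil\frac{j(\alpha-2)+2}{\,n-j\,}\big\rceil$. The numerator is positive for $\alpha\ge2$, so the ceiling equals $1$ precisely when $j(\alpha-2)+2\le n-j$, i.e.\ when $n\ge j(\alpha-1)+2$. Whenever this holds, the recursion collapses to $\varphi(j+1)=\varphi(j)+\alpha-1=(j+1)\alpha-j$, continuing the pattern. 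Since the step from $j$ to $j+1$ is only needed for $j\le g-2$, it suffices to verify the single inequality $n\ge(g-2)(\alpha-1)+2$.

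The main obstacle, and the only place where the structure of $G$ enters, is therefore this vertex count: an $\alpha$-regular graph of girth $g$ must have at least $(g-2)(\alpha-1)+2$ vertices. I would obtain it from the Moore bound for graphs of given degree and girth. For $\alpha=2$ the graph is a disjoint union of cycles and $n\ge g=(g-2)(\alpha-1)+2$ is clear. For $\alpha\ge3$ the Moore bound gives $n\ge\alpha+1$ when $g=3$ and $n\ge2\alpha$ when $g=4$, matching $(g-2)(\alpha-1)+2$ with equality, while for $g\ge5$ the Moore bound grows exponentially in $g$ and hence dominates the linear quantity $(g-2)(\alpha-1)+2$; a short comparison confirms the inequality for every $g\ge3$. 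With the vertex bound in hand the induction closes, giving $\varphi(k)=k\alpha-(k-1)=M(k)$, so that $C_G$ attains~(\ref{eq:bound2}) and is therefore $k$-optimal.
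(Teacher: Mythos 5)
Your proof is correct, but it takes a heavier route than the paper's for the part concerning the bound in~(\ref{eq:bound2}). Both arguments start identically: $k\le g-1$ means girth $\ge k+1$, so Lemma~\ref{lm:girth} gives $M(k)=k\alpha-k+1$, which settles attainment of Lemma~\ref{lm:upper}. The divergence is in how $\varphi(k)$ is handled. You evaluate $\varphi(k)$ \emph{exactly}, by showing the ceiling term equals $1$ at every step; this forces you to establish the vertex-count condition $n\ge (k-1)(\alpha-1)+2$, which you import from the Moore bound on cages. The paper's argument (visible in its proof sketch for Corollary~\ref{cor:Optimal}) needs only the one-sided observation that the ceiling term is $\ge 1$ whenever its numerator $2\varphi(j)-j\alpha$ is positive: positivity follows from $\varphi(j)\ge A(n,j,\alpha,2)\ge M(j)=j\alpha-j+1>j\alpha/2$, whence $\varphi(k+1)\le\varphi(k)+\alpha-1$ and so $\varphi(k)\le k\alpha-k+1$ by induction; the sandwich $k\alpha-k+1=M(k)\le A(n,k,\alpha,2)\le\varphi(k)\le k\alpha-k+1$ then closes the argument with no graph-theoretic counting at all, because the existence of $C_G$ itself supplies the matching lower bound on $\varphi$. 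What your route buys in exchange for the extra work is the exact value of the right-hand side of~(\ref{eq:bound2}) for \emph{all} $n$ above the threshold $(k-1)(\alpha-1)+2$ — this is precisely the computation underlying Lemma~\ref{lm:not tight}, whose hypothesis $\alpha k-\alpha-k+3\le n$ is your threshold in disguise. One cosmetic caveat: your treatment of the Moore-bound comparison for $g\ge 5$ is asserted rather than proved ("a short comparison confirms"); it is true, and is made rigorous in one line by bounding each term $(\alpha-1)^i\ge\alpha-1$ for $i\ge1$ in the Moore sums, but as written it is the one loose joint in an otherwise complete argument.
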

\begin{corollary}
\label{cor:Optimal}
An FR code $C_G$ based on an $\alpha$-regular graph~$G$ with girth $g\geq \alpha+1$ is optimal.
 \end{corollary}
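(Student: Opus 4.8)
The plan is to deduce Corollary~\ref{cor:Optimal} directly from Corollary~\ref{cor:girthOptimal} by checking that the hypothesis $g \geq \alpha+1$ forces $k$-optimality to hold across the \emph{entire} relevant range of reconstruction degrees, namely all $k$ with $1 \leq k \leq \alpha$. Recall from the definitions given earlier in the excerpt that an FR code is declared \emph{optimal} precisely when it is $k$-optimal for every $k \leq \alpha$. So the whole task reduces to confirming that the per-$k$ optimality already proved in Corollary~\ref{cor:girthOptimal} covers each such $k$.

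First I would invoke Corollary~\ref{cor:girthOptimal}, which guarantees that $C_G$ is $k$-optimal for every $k \leq g-1$. The only thing to verify is an inequality between indices: I need the range $\{1,\ldots,\alpha\}$ required by the definition of optimality to be contained in the range $\{1,\ldots,g-1\}$ supplied by Corollary~\ref{cor:girthOptimal}. This containment is exactly the statement $\alpha \leq g-1$, which is equivalent to the hypothesis $g \geq \alpha+1$. Hence for every $k$ in the admissible interval $1 \leq k \leq \alpha$ we automatically have $k \leq \alpha \leq g-1$, so Corollary~\ref{cor:girthOptimal} applies and $C_G$ is $k$-optimal.

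Since $C_G$ is $k$-optimal for all $k \leq \alpha$, by definition it is optimal, completing the argument. There is no substantive obstacle here: the statement is essentially a book-keeping corollary that repackages Corollary~\ref{cor:girthOptimal} under a girth hypothesis strong enough to sweep out the full range of $k$. The only point requiring a moment's care is the off-by-one comparison $\alpha \leq g-1 \Leftrightarrow g \geq \alpha+1$, and confirming that the definition of \emph{optimal} indeed quantifies over precisely $k \leq \alpha$ (which it does, matching the constraint $k \leq \alpha$ built into the universally-good setting). In short, the whole proof is a one-line specialization of the previous corollary to the case where the girth dominates the regularity degree.
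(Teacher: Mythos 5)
Your proposal is correct and follows essentially the same route as the paper: Corollary~\ref{cor:girthOptimal} (which rests on Lemma~\ref{lm:girth} and the fact that the ceiling term in the recursion of~(\ref{eq:bound2}) is at least $1$) gives $k$-optimality for every $k \leq g-1$, and the hypothesis $g \geq \alpha+1$ means this range contains all $k \leq \alpha$, which is precisely the definition of an optimal FR code. The off-by-one check $\alpha \leq g-1 \Leftrightarrow g \geq \alpha+1$ is indeed the only thing to verify.
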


The proof of the following theorem follows from Lemma~\ref{lm:girth} and the fact that any two cycles in a graph with girth $g$ have at most $\lfloor g/2\rfloor+1$ common vertices.
\begin{theorem}
\label{thm:rateGirth}
If $G$ is a graph with girth $g$, then the file size $M(k)$ of an FR code $C_G$ based on $G$ satisfies
\[M(k)=\left\{\begin{array}{cc}
         k\alpha -k+1 \;&\textmd{ if } k\leq g-1 \\
           k \alpha-k \;&\textmd{ if }g\leq k\leq g+\lceil\frac{g}{2}\rceil-2.
         \end{array}\right.
\]
\end{theorem}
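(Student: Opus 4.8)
The plan is to run everything through Lemma~\ref{lm:isoperimetric}, which tells us that $M(k)=k\alpha-\mu(k)$, where $\mu(k):=\max_{G'=(V',E')\in G_k}|E'|$ is the largest number of edges induced by $k$ vertices of $G$. The first branch of the formula is then immediate: since $G$ has girth exactly $g$, the hypothesis ``girth at least $k+1$'' of Lemma~\ref{lm:girth} holds precisely when $k\le g-1$, and there Lemma~\ref{lm:girth} already gives $M(k)=k\alpha-(k-1)=k\alpha-k+1$. So the real content is the second branch, namely the claim that $\mu(k)=k$ whenever $g\le k\le g+\lceil g/2\rceil-2$, which I would prove by matching an upper and a lower bound on $\mu(k)$.

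For the upper bound $\mu(k)\le k$, suppose to the contrary that some induced subgraph $H$ on $k$ vertices has $|E(H)|\ge k+1$. Then its circuit rank $|E(H)|-|V(H)|+c(H)\ge (k+1)-k+1=2$, where $c(H)$ is the number of components, so the cycle space of $H$ has dimension at least two and $H$ contains two distinct cycles $C_1,C_2$, each of length at least $g$ because $G$ has girth $g$. As discussed below, these can be chosen so that $|V(C_1)\cap V(C_2)|\le\lfloor g/2\rfloor+1$; granting this, I obtain
\[
k=|V(H)|\ \ge\ |V(C_1)\cup V(C_2)|\ \ge\ 2g-\bigl(\lfloor g/2\rfloor+1\bigr)\ =\ g+\lceil g/2\rceil-1,
\]
which contradicts $k\le g+\lceil g/2\rceil-2$. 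Hence $\mu(k)\le k$.

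The heart of the argument — and the step I expect to be the main obstacle — is this intersection bound on the two cycles. I would establish it by passing from $H$ to a minimal (e.g. $2$-core) subgraph of circuit rank $2$; topologically such a graph is a subdivided theta graph, figure-eight, or dumbbell. The figure-eight (two cycles sharing a single vertex, at least $2g-1$ vertices) and the dumbbell (two disjoint cycles joined by a path, at least $2g$ vertices) already give more than $g+\lceil g/2\rceil-1$ vertices, so only the theta graph is delicate: there $C_1$ and $C_2$ share one common path whose two complementary arcs themselves form a cycle, necessarily of length at least $g$, which forces the shared path to have at most $\lfloor g/2\rfloor$ edges and hence at most $\lfloor g/2\rfloor+1$ common vertices. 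This is exactly the extremal configuration making the bound tight.

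For the lower bound $\mu(k)\ge k$, note first that a shortest cycle is chordless, giving an induced subgraph with $g$ vertices and $g$ edges, so $\mu(g)=g$. Since we work in the regime $k\le\alpha$, the connected component $K$ of $G$ carrying this cycle is itself $\alpha$-regular and therefore has at least $\alpha+1>k$ vertices. Starting from the $g$-cycle I would adjoin, one at a time, a vertex of $K$ adjacent to the set built so far; this is always possible while the set is a proper subset of $K$, and each adjunction raises the induced edge count by at least one. After $k-g$ steps this produces an induced $k$-vertex subgraph with at least $g+(k-g)=k$ edges, so $\mu(k)\ge k$. Combining the two bounds gives $\mu(k)=k$, and therefore $M(k)=k\alpha-k$ throughout $g\le k\le g+\lceil g/2\rceil-2$, completing the second branch.
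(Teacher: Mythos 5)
Your overall route is the one the paper itself sketches: Lemma~\ref{lm:isoperimetric} (so $M(k)=k\alpha-\mu(k)$), Lemma~\ref{lm:girth} for the branch $k\le g-1$, and, for the second branch, the claim that $k\le g+\lceil g/2\rceil-2$ vertices cannot carry two distinct cycles, matched by a greedy construction of an induced $k$-vertex subgraph with $k$ edges. Your lower bound $\mu(k)\ge k$ is correct (and you are right to flag that it silently needs $k\le\alpha$, an assumption the paper makes throughout and which is genuinely necessary), and the figure-eight and dumbbell cases of the upper bound are fine. The gap is in the theta case, which is precisely the case on which the whole theorem hinges. The inference ``the two complementary arcs form a cycle of length at least $g$, which forces the shared path to have at most $\lfloor g/2\rfloor$ edges'' is a non sequitur: a lower bound on the length of the cycle $P_2\cup P_3$ gives no upper bound on the length of the shared path $P_1$. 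Worse, the intermediate claim you want is false in general. Take $G=Q_3$, the $3$-regular cube graph, of girth $g=4$: it contains the theta subgraph formed by the three internally disjoint paths of length $3$ between two antipodal vertices. The only cycles of that subgraph are three $6$-cycles, and any two of them share a path with $4$ vertices, whereas $\lfloor g/2\rfloor+1=3$; so no choice of two cycles meets your bound. (This also shows that the two-cycle ``fact'' quoted in the paper holds as stated only for cycles of length exactly $g$, while the cycles inside your $H$ need not be shortest cycles.)

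The repair is easy and keeps your structure. In the theta case, let the three internally disjoint paths between the branch vertices have $p_1,p_2,p_3$ edges. The three cycles have lengths $p_1+p_2$, $p_1+p_3$, $p_2+p_3$, each at least $g$ because $G$ has girth $g$; summing gives $2(p_1+p_2+p_3)\ge 3g$, so the theta subgraph has $p_1+p_2+p_3-1\ge\lceil 3g/2\rceil-1=g+\lceil g/2\rceil-1$ vertices, contradicting $k\le g+\lceil g/2\rceil-2$ directly, with no intersection bound needed. (Equivalently, split into cases: if $p_{\min}\le\lfloor g/2\rfloor$ your union bound applies; otherwise all three paths have at least $\lfloor g/2\rfloor+1$ edges and $p_1+p_2+p_3-1\ge 3\lfloor g/2\rfloor+2\ge g+\lceil g/2\rceil-1$ anyway.) With this substitution your argument is complete and coincides in substance with the paper's.
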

A $(d,g)$-\emph{cage} is a $d$-regular graph with girth $g$ and minimum number of vertices.
Let $N(d,g)$ be the minimum number of vertices in a $(d,g)$-cage.
A lower bound on $N(d,g)$, known as \emph{Moore bound}~\cite[p. 180]{Biggs}, is given by
\begin{equation*}
   n_0(d,g)=\left\{\begin{array}{c c}
         1+d\sum_{i=0}^{\frac{g-3}{2}}(d-1)^i &\;\textmd{ if }g \textmd{ is odd} \\
            2\sum_{i=0}^{\frac{g-2}{2}}(d-1)^i&\;\textmd{ if }g \textmd{ is even}
         \end{array}\right..
       \end{equation*}
\begin{lemma}
\label{lm:not tight}
The  bound in~(\ref{eq:bound2}) is not tight  for $\rho=2$ if $$\alpha k-\alpha-k+3\leq n< N(\alpha,k+1).$$
\end{lemma}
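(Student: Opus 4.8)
The plan is to read ``the bound (\ref{eq:bound2}) is not tight'' as the statement that no $(n,\alpha,2)$ FR code attains $M(k)=\varphi(k)$, and to prove it by pinning the best achievable $M(k)$ strictly below $\varphi(k)$. The argument has two pillars: under the lower bound $n\ge \alpha k-\alpha-k+3$ the recursion defining $\varphi$ collapses to $\varphi(k)=\alpha k-k+1$; and under the upper bound $n<N(\alpha,k+1)$ every $(n,\alpha,2)$ FR code satisfies $M(k)\le \alpha k-k$. Since $\rho=2$ and the code is universally good, it is exactly a code $C_G$ based on some $\alpha$-regular simple graph $G$ on $n$ vertices, so Lemmas~\ref{lm:girth} and~\ref{lm:upper} apply verbatim.

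First I would evaluate $\varphi(k)$, claiming that $n\ge \alpha k-\alpha-k+3$ forces $\varphi(j)=j\alpha-j+1$ for all $1\le j\le k$, which I prove by induction on $j$. The base case $\varphi(1)=\alpha$ is immediate. Assuming $\varphi(j)=j\alpha-j+1$ with $j\le k-1$, the numerator in the recursion of (\ref{eq:bound2}) becomes
$$2\varphi(j)-j\alpha=j(\alpha-2)+2,$$
which is positive, so the ceiling term is at least $1$ and equals $1$ precisely when $j(\alpha-2)+2\le n-j$, i.e. when $n\ge j(\alpha-1)+2$. This threshold is increasing in $j$, hence over $j\le k-1$ it is largest at $j=k-1$, where it reads $n\ge (k-1)(\alpha-1)+2=\alpha k-\alpha-k+3$, which is exactly our hypothesis. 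Thus the ceiling equals $1$ at every step, giving $\varphi(j+1)=(j+1)\alpha-(j+1)+1$ and, in particular, $\varphi(k)=\alpha k-k+1$.

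Next I would bound the achievable $M(k)$. By Lemma~\ref{lm:upper}, $M(k)\le \alpha k-k+1$ for every $(n,\alpha,2)$ FR code, and by Lemma~\ref{lm:girth} equality $M(k)=\alpha k-k+1$ holds if and only if the underlying $\alpha$-regular graph has girth at least $k+1$. An $\alpha$-regular graph of girth at least $k+1$ has at least $N(\alpha,k+1)$ vertices, since its girth equals some $g\ge k+1$ and therefore its order is at least $N(\alpha,g)\ge N(\alpha,k+1)$, using that the cage number $N(\alpha,\cdot)$ is non-decreasing in the girth. Hence when $n<N(\alpha,k+1)$ no such graph exists, so every $(n,\alpha,2)$ FR code satisfies $M(k)\le \alpha k-k$. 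Combining the two parts yields $A(n,k,\alpha,2)\le \alpha k-k<\alpha k-k+1=\varphi(k)$, so the bound is not tight.

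I expect the main obstacle to be the second part, specifically the clean identification of the minimum order of an $\alpha$-regular graph of girth at least $k+1$ with the cage number $N(\alpha,k+1)$; this requires the (standard but non-trivial) monotonicity of cage numbers in the girth, which I would cite rather than reprove. The induction computing $\varphi(k)$ is routine once one observes that the per-step threshold $n\ge j(\alpha-1)+2$ is monotone in $j$, so that the single hypothesis $n\ge \alpha k-\alpha-k+3$ simultaneously validates every intermediate step.
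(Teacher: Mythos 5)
Your proof is correct and follows exactly the route the paper intends: the lemmas stated just before this one (Lemma~\ref{lm:upper}, Lemma~\ref{lm:girth}, and the cage number $N(\alpha,k+1)$) are set up precisely so that the hypothesis $n<N(\alpha,k+1)$ rules out girth $\geq k+1$ and forces $M(k)\leq k\alpha-k$, while the induction showing $\varphi(k)=k\alpha-k+1$ under $n\geq \alpha k-\alpha-k+3$ is the same computation that produces that threshold. Your explicit appeal to monotonicity of cage numbers in the girth is a reasonable (and honest) way to close the one step the paper leaves implicit.
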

As a consequence of Lemma~\ref{lm:not tight} we have that the bound in~(\ref{eq:bound2}) is not always tight and hence we have a similar better bound on $A(n,k,\alpha, \rho)$:
 \begin{equation*}
   A(n,k,\alpha, \rho)\leq  \varphi'(k), \textmd{ where } \varphi'(1)=\alpha,
  \end{equation*}
  \[\textmd{  } \varphi'(k+1)= A(n,k,\alpha, \rho)+\alpha-\left\lceil\frac{\rho  A(n,k,\alpha, \rho)-k\alpha}{n-k}\right\rceil.\]

\section{Optimal FR Codes with Repetition Degree $\rho > 2$}
\label{sec:rho>2}

In this section, we consider FR codes with repetition degree $\rho> 2$. Note, that while  codes with $\rho=2$ have the maximum data/storage ratio,  codes with $\rho>2$ provide multiple choices for node repairs. In other words, when a node fails, it can be repaired from different $d$-subsets of available nodes.

We present  generalizations  of the  constructions from the previous section which were based on Tur\'an graphs and graphs with a given girth. These generalizations employ transversal designs and generalized polygons, respectively.

A \emph{transversal design} of group size $h$ and block size $\ell$,  denoted by $\text{TD}(\ell, h)$
is a triple $(\cP,\mathcal{G},\mathcal{B})$, where

\begin{enumerate}
\item $\cP$ is a set of $\ell h$ \emph{points};

\item $\mathcal{G}$ is a partition of $\cP$ into $\ell$ sets
(\emph{groups}), each one of size $h$;

\item $\mathcal{B}$ is a collection of $\ell$-subsets of $\cP$
(\emph{blocks});

\item each block meets each group in exactly one point;

\item any pair of points from different groups is contained in exactly one block.
\end{enumerate}
 The properties of a transversal design $\text{TD}(\ell,h)$ which will be useful for our constructions  are summarized in the following lemma~\cite{Anderson}.

\begin{lemma}
\label{lm:TDparameters}
Let $(\cP,\cG,\cB)$ be a transversal design $\text{TD}(\ell,h)$. The number of points is given by $|\cP|=\ell h$, the number of groups is given by $|\cG|=\ell$, the number of blocks is given by $|\cB|=h^2$, and the number of blocks that contain a given point is equal to $h$.
\end{lemma}

 Let $\textmd{TD}$ be a transversal design $\textmd{TD}(\rho,\alpha)$, $\rho\leq \alpha+1$, with block size $\rho$ and group size $\alpha$. Let $C_{\textmd{TD}}$ be an $(n,\alpha,\rho)$ FR code based on $\textmd{TD}$ (see Section~\ref{sec:DRESS}).
 By Lemma~\ref{lm:TDparameters}, there are $\rho \alpha$ points in $\textmd{TD}$ and hence  $n= \rho \alpha$. Note, that  all the symbols stored in  node $i$ correspond to the set $N_i$ of blocks from $\textmd{TD}$ that contain the point $i$. Since by Lemma~\ref{lm:TDparameters} there are $\alpha$  blocks that contain a given point, it follows that each node stores $\alpha$ symbols.

%

\begin{theorem}
\label{lm:TDrate}
Let $k=b\rho+t$, for integers $b,t\geq 0$ such that $t\leq \rho-1$. For an $(n=\rho\alpha,\alpha,\rho)$ FR code $C_{\textmd{TD}}$  based on a transversal design $\textmd{TD}(\rho,\alpha)$ we have
$$M(k)\geq k\alpha-\binom{k}{2}+\rho\binom{b}{2}
+bt.
$$
\end{theorem}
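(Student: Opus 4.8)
The plan is to compute $|\bigcup_{i\in I}N_i|$ exactly for an arbitrary $k$-subset $I$ of nodes by double counting, and then bound the two ``overhead'' terms that appear. Recall that the nodes of $C_{\textmd{TD}}$ are the points of $\textmd{TD}(\rho,\alpha)$, the symbols are the blocks, and $N_i$ is the set of the $\alpha$ blocks through point $i$. For a fixed $I$ with $|I|=k$, I would introduce two bookkeeping quantities: for each block $B$ let $x_B=|B\cap I|$ be the number of chosen points it meets, and for each group $G_j$ ($1\le j\le\rho$) let $k_j=|I\cap G_j|$, so that $\sum_{j=1}^{\rho}k_j=k$. Since $|\bigcup_{i\in I}N_i|$ is exactly the number of blocks meeting $I$, it equals $\sum_{B}[x_B\ge 1]$.

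Next I would record the two counting identities coming from Lemma~\ref{lm:TDparameters} and the axioms of the transversal design. First, summing incidences gives $\sum_{B}x_B=\sum_{i\in I}\alpha=k\alpha$, because each of the $k$ points lies in exactly $\alpha$ blocks. Second, $\sum_{B}\binom{x_B}{2}$ counts the pairs of points of $I$ sharing a block; since every block meets each group once, two points in a common block lie in different groups, and by axiom~5 each cross-group pair lies in exactly one block. Hence $\sum_{B}\binom{x_B}{2}$ equals the number of cross-group pairs in $I$, namely $\binom{k}{2}-\sum_{j=1}^{\rho}\binom{k_j}{2}$. Using the elementary identity $x-1=\binom{x}{2}-\binom{x-1}{2}$ together with $|\bigcup_{i\in I}N_i|=\sum_{B}x_B-\sum_{B:x_B\ge1}(x_B-1)$, these two identities combine into the exact expression
\[
\left|\bigcup_{i\in I}N_i\right| = k\alpha-\binom{k}{2}+\sum_{j=1}^{\rho}\binom{k_j}{2}+\sum_{B:\,x_B\ge1}\binom{x_B-1}{2}.
\]

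To finish, I would bound the last two terms from below. The final sum is a sum of binomial coefficients and is therefore nonnegative, so it can simply be dropped. For $\sum_{j}\binom{k_j}{2}$ I would invoke convexity of $\binom{\cdot}{2}$: over all nonnegative integers summing to $k=b\rho+t$, this sum is minimized by the balanced distribution in which $t$ groups have $b+1$ points and $\rho-t$ groups have $b$ points, giving value $(\rho-t)\binom{b}{2}+t\binom{b+1}{2}=\rho\binom{b}{2}+bt$. Substituting yields $|\bigcup_{i\in I}N_i|\ge k\alpha-\binom{k}{2}+\rho\binom{b}{2}+bt$ for every $I$, and taking the minimum over $I$ gives the claimed lower bound on $M(k)$.

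The routine steps are the final convexity estimate and the nonnegativity observation; the crux of the argument is the exact double-counting formula, and in particular correctly identifying $\sum_{B}\binom{x_B}{2}$ as the number of cross-group pairs of $I$. This is where the two structural axioms of the transversal design (each block meets each group once, and each cross-group pair lies in a unique block) enter, and getting this identification right is the main obstacle; once it is in place the rest is a short convexity argument.
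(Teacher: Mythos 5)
Your proof is correct and follows essentially the same route as the paper's own argument: count incidences of the chosen $k$ points with blocks, use the transversal-design axioms to identify the pairwise overlaps with the cross-group pairs $\binom{k}{2}-\sum_{j}\binom{k_j}{2}$, and then minimize $\sum_{j}\binom{k_j}{2}$ via convexity at the balanced distribution $k=b\rho+t$, giving $\rho\binom{b}{2}+bt$. The only (harmless) difference is cosmetic: you derive an exact expression for the union with an explicit nonnegative surplus term $\sum_{B:\,x_B\ge 1}\binom{x_B-1}{2}$ and drop it, whereas the paper simply lower-bounds the union by inclusion--exclusion.
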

\begin{remark} Note, that for all $k\geq \rho+1$, the file size of the FR code $C_{\textmd{TD}}$ is strictly larger than the MBR capacity.
\end{remark}

Note that the incidence matrix of the transversal design $\textmd{TD}(2,\alpha)$ is equal to the incidence matrix of the $(2\alpha, 2)$-Tur\'an graph, and hence in this case $C_{TD}=C_{T}$.
\begin{example}
\label{ex:TD_34}
Let TD be a transversal design $\textmd{TD}(3,4)$  defined as follows:
 $\cP=\{1,2,\ldots,12\}$; $\mathcal{G}=\{G_1,G_2,G_3\}$, where
$G_1=\{1,2,3,4\}$, $G_2=\{5,6,7,8\}$, and $G_3=\{9,10,11,12\}$;
$\mathcal{B}=\{B_1,B_2,\ldots, B_{16}\}$,
 where
$B_1=\{1,5,9\}$, $B_2=\{1,6,10\}$, $B_3=\{1,7,11\}$, $B_4=\{1,8,12\}$,
$B_5=\{2,5,10\}$, $B_6=\{2,6,9\}$, $B_7=\{2,7,12\}$, $B_8=\{2,8,11\}$,
$B_9=\{3,5,12\}$, $B_{10}=\{3,6,11\}$, $B_{11}=\{3,7,10\}$, $B_{12}=\{3,8,9\}$,
$B_{13}=\{4,5,11\}$, $B_{14}=\{4,6,12\}$, $B_{15}=\{4,7,9\}$, and $B_{16}=\{4,8,10\}$.


The placement of symbols from a codeword of the corresponding MDS code of length $16$ is shown in Fig.~\ref{fig:TD34}.
\begin{figure}[t]
 \centering
 \includegraphics[width=0.80\columnwidth]{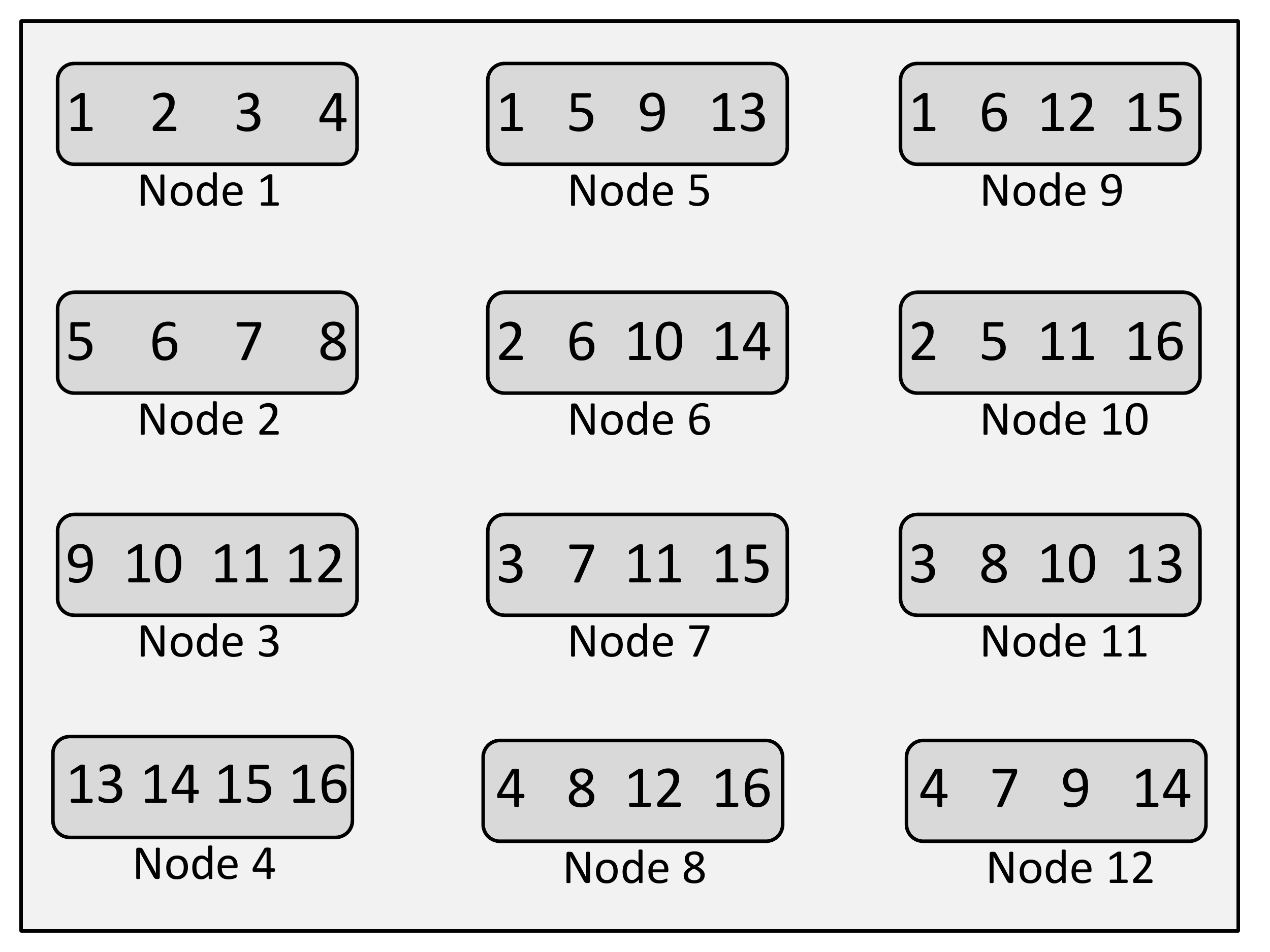}
 \caption{The $(12,4,3)$ FR code based on \textmd{TD}(3,4)}\label{fig:TD34}
\end{figure}
The  values of a file size $M(k)$ for $1\leq k\leq 4$ are given in the following~table.
\[\begin{tabular}{|c|c|}
    \hline
    $k$ & $M(k)$ \\  \hline\hline
    1 & 4 \\\hline
    2 & 7 \\\hline
    3 & 9 \\\hline
    4 & 11 \\
    \hline
  \end{tabular}
\]

\end{example}

\begin{remark}
The conditions on the parameters of TD such that the bound on the file size of an FR code $C_{\textmd{TD}}$ from Theorem~\ref{lm:TDrate} attains the recursive bound in~(\ref{eq:bound2}) can be found in~\cite{SiEtFR}.
\end{remark}

Similarly to an FR code $C_{G}$ with $\rho=2$ based on a graph~$G$ with girth $g$, one can consider an FR code $C_{\textmd{GP}}$ based on a generalized $g$-gon (generalized polygon GP~\cite{Biggs}) for $\rho>2$. One can prove that the file size of $C_{\textmd{GP}}$ is identical to the file size of $C_{G}$ for $k\leq g+\lceil\frac{g}{2}\rceil-2$ given in Theorem~\ref{thm:rateGirth}. However, a generalized $g$-gon  is known to exist only for $g\in\{3,4,6,8\}$.
This observation also holds for a general biregular bipartite graph of girth $2g$, not only the incidence graph of a generalized polygon.

\begin{remark}
\label{rm:expander-FR}
 Note that the problem of constructing FR codes with $\rho>2$ also can be considered in terms of bipartite \emph{expander} graphs (see e.g~\cite{GUV07}). Let $G_{Ex}=(L\cup R,E)$ be a bipartite  expander and let $C_{Ex}$ be the FR code such that the subset $N_i$, $1\leq i\leq n$, corresponds to the $i$th vertex in $L$ and the symbol $j$, $1\leq j\leq \theta$, corresponds to the $j$th vertex in $R$, $|L|=n$ and $|R|=\theta$.
 Then calculating $M(k)$ can be described by calculating the number of neighbours of any subset of $L$ of size $k$. In other words, for an FR code with  file size $M(k)$ it should hold that $|\Gamma(A)|\geq M(k)$ for every $A\subseteq L$ of size $k$, where $\Gamma(A)$ denotes the set of neighbours of~$A$. Hence, to have an FR code with file size $M(k)$, one need to construct a $(k,\frac{M(k)}{k})$ expander graph, where $\frac{M(k)}{k}$ is its expansion factor~\cite{GUV07}.
\end{remark}


\section{Fractional Repetition Batch Codes}
\label{sec:FRB}

In this section we propose a new type of codes for DSS, called \emph{fractional repetition batch}  (FRB) codes, which enable uncoded efficient exact node repairs and load balancing which can be performed by several users in parallel. An FRB code is a combination of an FR code and an uniform combinatorial batch code.

The family of codes called \emph{batch codes}  was proposed  in~\cite{IKOS04} for load balancing in distributed storage.
A \emph{batch code} stores $\theta$ (encoded) data symbols in $n$ system nodes in such a way that any batch of $t$ data symbols can be decoded by reading at most one symbol from each node.
In a $\rho$-\emph{uniform} \emph{combinatorial batch code}, proposed in~\cite{PSW08},
each node stores a subset of data symbols and no decoding is required during retrieval of any batch of $t$ symbols. Each symbol is stored in exactly $\rho$ nodes and hence it is also called \emph{a replication based} batch code. A $\rho$-uniform combinatorial batch code is denoted by $\rho-(\theta,N,t,n)$-CBC, where $N=\rho \theta$ is the  total storage over all the $n$ nodes. These codes were studied in~\cite{IKOS04,PSW08,SiGa13}.

Next, we provide a formal definition of FRB codes. This definition is based on the definitions of a DRESS code and a uniform combinatorial batch code.
Let $\bf f$ $\in \F_q^M$ be a file of size $M$ and let $c_{\bf f}\in \F_q^{\theta}$ be a codeword of an $(\theta,M)$ MDS code which encodes the data $\bf f$.
Let $\{N_1,\ldots,N_n\}$ be a collection of $\alpha$-subsets of the set $[\theta]$.
 A $\rho-(n,M,k,\alpha,t)$ \emph{FRB code} $C$, $k\leq \alpha$, $t\leq M$, represents a system of $n$ nodes with the following properties:
\begin{enumerate}
  \item Every node $i$, $1\leq i\leq n$, stores $\alpha$ symbols of $c_{\bf f}$ indexed by $N_i$;
  \item Every symbol of $c_{\bf f}$ is stored in $\rho$ nodes;
  \item From any set of $k$ nodes it is possible to reconstruct the stored file $\bf f$, in other words, $M=\min_{|I|=k}|\cup_{i\in I}N_i|$;
  \item Any batch of $t$ symbols from $c_{\bf f}$ can be retrieved by downloading at most one symbol from each node.
\end{enumerate}
Note that the retrieval of any batch of $t$ symbols can be performed by $t$ different users in parallel, where each user gets a different symbol.


In the following, we present our constructions of FRB codes which are based on the uniform batch codes from~\cite{PSW08} and~\cite{SiGa13} and on FR codes considered in Sections~\ref{sec:rho2} and ~\ref{sec:rho>2}.

\begin{theorem}
\label{thm:FRB from TD}
$~$
\begin{enumerate}
  \item If $K_{\alpha,\alpha}$ is a complete bipartite graph with $\alpha>2$, then $C_{K_{\alpha,\alpha}}$ is a $2-(2\alpha, M,k,\alpha,5)$ FRB code with
  $M=k\alpha-\left\lfloor\frac{k^2}{4}\right\rfloor$.
  \item If $G$ is an $\alpha$-regular graph on $n$ vertices with girth~$g$, then $C_G$ is a $2-(n, M, k,\alpha,2g-\lfloor \frac{g}{2}\rfloor-1)$ FRB code with
  \[M=\left\{\begin{array}{cc}
         k\alpha -k+1 \;&\textmd{ if } k\leq g-1 \\
           k \alpha-k \;&\textmd{ if }g\leq k\leq g+\lceil\frac{g}{2}\rceil-2.
         \end{array}\right.
\]
  \item Let TD be a resolvable transversal design $\textmd{TD}(\alpha-1,\alpha)$,  for a prime power $\alpha$.  $C_{\textmd{TD}}$ is an $(\alpha-1)-(\alpha^2-\alpha, M,k,\alpha,\alpha^2-\alpha-1)$ FRB code with
  $M\geq k\alpha -\binom{k}{2}+(\alpha-1)\binom{x}{2}+xy$, where $x,y$ are nonnegative integers which satisfy $k=x(\alpha-1)+y$, $y\leq \alpha-2$.
\end{enumerate}
\end{theorem}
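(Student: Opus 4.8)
The plan is to prove each of the three parts by establishing two independent facts: the file size $M$, which follows directly from the FR-code results already proved, and the batch parameter $t$, which requires verifying that any batch of $t$ symbols can be served with at most one symbol read per node. For the file size claims I would simply invoke Theorem~\ref{thm:alternativeTuran} (specialized to $r=2$, where $\alpha=(r-1)\frac{n}{r}=\frac{n}{2}$ gives $n=2\alpha$ and the floor term becomes $\lfloor k^2/4\rfloor$) for part~(1), Theorem~\ref{thm:rateGirth} for part~(2), and Theorem~\ref{lm:TDrate} for part~(3) with the substitution $\rho=\alpha-1$, $b=x$, $t=y$. So the genuine content of this theorem is entirely in the batch parameters, and that is where I would spend all the effort.

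The central tool is the standard translation of the combinatorial batch-code condition into a matching/Hall-type condition on the incidence structure. Reading at most one symbol from each node while retrieving a batch of $t$ (not necessarily distinct) requested symbols amounts to finding, for every multiset of $t$ symbols, a system of distinct representatives that assigns each requested symbol to a distinct node containing it. By Hall's theorem this is equivalent to the condition that for every subset $S$ of symbols, the union of the node-sets containing the symbols of $S$ has size at least $|S|$, where one must be careful to handle the multiset version (a symbol requested up to $\rho$ times can be served since it sits in exactly $\rho$ nodes). Thus the strategy for each part is: identify the largest $t$ such that every batch of size $t$ admits such a matching, using the specific geometry of the underlying graph or design. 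For the graph-based parts this means analyzing how the edges of $K_{\alpha,\alpha}$ (resp. a girth-$g$ graph) incident to a set of vertices behave, since symbols are edges and nodes are vertices; the key quantity is the minimum, over symbol-subsets $S$, of $|\Gamma(S)|-|S|$ in the symbol-to-node bipartite incidence.

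For part~(1), with $K_{\alpha,\alpha}$ the symbols are the $\alpha^2$ edges and each edge lies in exactly its two endpoint-nodes, so I would show that any $5$ edges can be matched to distinct incident vertices but that some set of $6$ edges cannot, pinning down $t=5$; the obstruction to $t=6$ should come from a small dense configuration (e.g. edges forming a structure whose vertex-neighborhood is too small, such as $K_{2,3}$ or a triangle-like pattern), which is exactly why $\alpha>2$ is required. For part~(2) the girth hypothesis forces every small edge-set to span many vertices (few cycles), and I would convert the girth bound into a lower bound on $|\Gamma(S)|$ to obtain the threshold $t=2g-\lfloor g/2\rfloor-1$; here I expect to reuse the same cycle-counting fact quoted before Theorem~\ref{thm:rateGirth}, namely that two cycles share at most $\lfloor g/2\rfloor+1$ vertices. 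For part~(3) I would appeal to the uniform combinatorial batch codes built from resolvable transversal designs in~\cite{SiGa13,PSW08}: a resolvable $\textmd{TD}(\alpha-1,\alpha)$ yields a CBC whose batch parameter is precisely $\alpha^2-\alpha-1=\theta-1$, and since the FR code $C_{\textmd{TD}}$ has exactly this incidence structure, the batch property is inherited directly. The main obstacle throughout is the tightness of the batch parameters: proving the positive direction (every batch of size $t$ is serviceable) via Hall's condition is routine once the incidence geometry is set up, but identifying the exact threshold and exhibiting the specific blocking configuration that fails at $t+1$ is the delicate step, and it is where the precise constants $5$, $2g-\lfloor g/2\rfloor-1$, and $\theta-1$ must be matched to the structure.
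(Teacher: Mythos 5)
Your decomposition is exactly the one the paper intends: the file size $M$ is imported wholesale from Theorems~\ref{thm:alternativeTuran}, \ref{thm:rateGirth} and~\ref{lm:TDrate}, and the only new content is the batch parameter $t$, which the paper itself does not prove here --- it defers to the arXiv version~\cite{SiEtFR} and to the uniform CBC results of~\cite{PSW08,SiGa13}. Your Hall's-theorem/SDR framework is the standard content behind those results, and your girth analysis for parts (1)--(2) is the right mechanism: a Hall-violating set of edges must contain a connected subgraph with more edges than vertices, the minimal such subgraph is a theta graph (two vertices joined by three internally disjoint paths), and in a graph of girth $g$ a theta graph has at least $\lceil 3g/2\rceil = 2g-\lfloor g/2\rfloor$ edges, which yields $t=2g-\lfloor g/2\rfloor-1$; specializing to $g=4$ gives $t=5$ for $K_{\alpha,\alpha}$, with $K_{2,3}$ (six edges on five vertices) as the blocking configuration you correctly anticipate. (A ``triangle-like pattern'' cannot occur in a bipartite graph, so $K_{2,3}$ is the obstruction; $\alpha>2$ is needed simply for such a configuration, and indeed for five distinct symbols, to exist.)

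Two concrete slips should be fixed. First, the multiset reading you flag as a point of care would actually falsify the theorem: with $\rho=2$ and $t=5$, a batch requesting the same edge three times cannot be served, since that symbol lies in only two nodes. The paper's definition --- $t$ different users, ``each user gets a different symbol'' --- means batches are \emph{sets} of distinct symbols, and your proof must adopt that reading; otherwise parts (1) and (2) collapse to $t\leq\rho$. Second, in part (3) the identity you assert, $\alpha^2-\alpha-1=\theta-1$, is false: for $C_{\textmd{TD}}$ the symbols are the $\theta=\alpha^2$ blocks of $\textmd{TD}(\alpha-1,\alpha)$ and the nodes are the $n=\alpha^2-\alpha$ points, so the claimed batch parameter is $n-1$, not $\theta-1$. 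The Hall condition you must verify is that every collection of $j\leq \alpha^2-\alpha-1$ blocks covers at least $j$ points; this is precisely where the \emph{resolvability} hypothesis enters (it is the hypothesis under which~\cite{SiGa13} establishes the CBC with this parameter), and your ``inherited directly'' sentence, as written, conflates the two sides of the incidence structure and never uses resolvability at all.
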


\begin{example}
{~}
\begin{itemize}
\item Consider the code $C_{K_{3,3}}$ based on $K_{3,3}$ (see also Example~\ref{ex:bipartite} for an FR code based on $K_{3,3}$).
By Theorem~\ref{thm:FRB from TD}, for $k=3$, $C_{K_{3,3}}$ is a $2-(6, 7,3,3,5)$ FRB code.

\item Consider the code $C_{\textmd{TD}}$  based on the resolvable transversal design $\textmd{TD}=\textmd{TD}(3,4)$ (see also Example~\ref{ex:TD_34} for an FR code based on $\textmd{TD}(3,4)$). By Theorem~\ref{thm:FRB from TD}, for $k=4$, $C_{\textmd{TD}}$ is a $3-(12, 11,4,4,11)$ FRB code, which stores a file of size $11$ and allows for retrieval of  any (coded) $11$ symbols, by reading at most one symbol from a node. In particular, when using a systematic MDS code, $C_{\textmd{TD}}$  provides load balancing in data reconstruction.
\end{itemize}

\end{example}


\begin{remark} Similarly to FR codes, the problem of constructing for FRB codes can be considered in terms of bipartite expanders (see Remark~\ref{rm:expander-FR}).
The construction of batch codes based on (unbalanced) expander graphs was proposed in~\cite{IKOS04}. To construct an FRB code, one need a bipartite expander with two different expansion factors, $M(k)/k$ and 1, for two sides $L$ and $R$ of a graph, respectively.
\end{remark}

%
%
%

\section{Conclusion}
\label{sec:conclusion}
We considered the problem of constructing  optimal  FR codes and as a consequence, optimal DRESS codes.
We presented constructions of FR codes based on Tur\'an graphs,
graphs with a given girth, transversal designs, and generalized polygons.
Based on a connection between FR codes and batch codes, we proposed a new family of codes for DSS, FRB codes, which have the properties of batch codes and FR codes simultaneously.  These are the first codes for DSS which allow  uncoded efficient exact repairs and load balancing.

%
%
%
%
%
\section*{Acknowledgment}
This research was supported in part by the Israeli Science Foundation (ISF), Jerusalem, Israel, under Grant no.~10/12.
Natalia Silberstein was also supported in part at the Technion by a Fine Fellowship.



%

\end{document}